\newcommand{\myincludegraphics}[2]{
\begin{figure}
  \centering
  \includegraphics[scale=0.50]{img_#1}
  \caption{#2}
  \label{fig:#1}
\end{figure}
}
\newcommand{\myfigref}[1]{Figure~\ref{fig:#1}}
\newtheorem{definition}{Definition}
\newtheorem{theorem}{Theorem}
\newtheorem{corollary}{Corollary}[theorem]
\newtheorem{proposition}{Proposition}
\newtheorem{assumption}{Assumption}
\begin{document}

\title{Cyber Risk Analysis of Combined Data Attacks Against Power System State Estimation}

\author{Kaikai~Pan,~\IEEEmembership{Student~Member,~IEEE,}
        Andr\'e~Teixeira,~\IEEEmembership{Member,~IEEE,}
        Milos~Cvetkovic,~\IEEEmembership{Member,~IEEE,}
        and~Peter~Palensky,~\IEEEmembership{Senior~Member,~IEEE}
}

\maketitle

\begin{abstract}
Understanding smart grid cyber attacks is key for developing appropriate protection and recovery measures. Advanced attacks pursue maximized impact at minimized costs and detectability.
This paper conducts risk analysis of combined data integrity and availability attacks against the power system state estimation. We compare the combined attacks with pure integrity attacks - false data injection (FDI) attacks. A security index for vulnerability assessment to these two kinds of attacks is proposed and formulated as a mixed integer linear programming problem. We show that such combined attacks can succeed with fewer resources 
than FDI attacks. The combined attacks with limited knowledge of the system model also expose advantages in keeping stealth against the bad data detection. Finally, the risk of combined attacks to reliable system operation is evaluated using the results from vulnerability assessment and attack impact analysis. The findings in this paper are validated and supported by a detailed case study.
\end{abstract}

\begin{IEEEkeywords}
Combined integrity and availability attack, false data injection, risk analysis, power system state estimation

\end{IEEEkeywords}

\IEEEpeerreviewmaketitle

\section{Introduction}

\IEEEPARstart{T}{he} increasingly digitized power system offers more data, details, and controls in a real-time fashion than its non-networked predecessors. One of the benefiting applications of this development is State Estimation (SE): Remote Terminal Units (RTUs) provide measurement data via Information and Communication Technology (ICT) infrastructure such as Supervisory Control and Data Acquisition (SCADA) system. The SE provides the operator with an estimate of the state of the electric power system. This state information is then used and processed by the energy management system (EMS) for optimal power flow (OPF), contingency analysis (CA), and automatic generation control (AGC). Security of supply depends on the EMS, which in turn depends on a reliable SE.

As discussed in \cite{Giani2009}, 
the SCADA system is vulnerable to a large number of security threats. A class of integrity data attack, known as false data injection (FDI) attack, has been studied with 
considerable attention. With modifying the measurement data, this attack can pass the Bad Data Detection (BDD) within SE to keep stealth \cite{LiuNingReiter2009}, by tampering of RTUs, the communication links to the control center, or even the databases and IT software in the control center. However, such FDI attack needs intensive attack resources such as the knowledge of the system model and the capability to corrupt the integrity on a set of measurements. 
Denial-of-service (DoS) attacks \cite{Wang2013} \cite{Deka2015}, a type of availability attack, are much ``cheaper'' to achieve, especially if RTUs communicate via insecure communication channels.
In this paper, we focus on combined attacks 
where the SE is corrupted by both integrity attacks and availability attacks simultaneously.
We compare combined attacks and FDI attacks under different levels of adversarial knowledge and resources.

\subsection{State of the Art}

Research in the literature has focused on FDI attacks from many aspects of risk assessment \cite{Ronald2012}, e.g., vulnerability analysis, attack impact assessment and mitigation schemes development. As first shown in \cite{LiuNingReiter2009}, a class of FDI attack, so-called \emph{stealth} attack, can perturb the state estimate without triggering alarms in BDD within SE. Vulnerability of SE to \emph{stealth FDI attacks} is usually quantified by computing attack resources needed by the attacker to alter specific measurements and keep stealth against the BDD \cite{Hug2012, 
Sandberg2010, TeixeiraSou2015}. 

Since state estimates are inputs of many application specific tools in EMS, the corrupted 
estimates can infect further control actions. The estimate errors due to FDI attacks were analyzed in \cite{TeixeiraDanSandbergEtAl2011} and \cite{KosutJiaThomasEtAl2011}. The results illustrate that the errors could be significant even with a small number of measurements being compromised. The work in \cite{XieMoSinopoli2011} and \cite{Jia2014} studied the potential economic impact of FDI attacks against SE by observing the nodal price of market operation. The attacker could obtain economic gain or cause operating costs in the market. Recent work in \cite{Liang2016} studied the physical impact of FDI attacks with the attacker's goal to cause a line overflow. 

In order to defend against stealth FDI attacks, mitigation schemes have been proposed to improve the bad data detection algorithm or safeguard certain measurements from adversarial data injection. Sequential detection (or quickest detection) of FDI attacks was designed mainly based on well-known Cumulative Sum (CUSUM) algorithm in \cite{Li2015}. In reference \cite{Ashok2016},  
detection methods that leverage synchrophasor data and other forecast information were presented. 
The network layer and application layer mitigation schemes, such as multi-path routing and data authentication and protection, are proved to be effective to decrease the vulnerability \cite{VukovicSouDanEtAl2012} \cite{Pan2016}. 

It is worth noting that the majority of research has focused on stealth FDI attacks from a specific aspect of vulnerability or impact assessment. The work in \cite{Deka2015} first considered adding a class of availability attack, so-called jamming attack, to the attack scenarios against SE. Our recent paper \cite{Pan2016} studied the \emph{stealth combined attacks} with different measurement routing topologies, concluding that such attacks may need less attack resources than FDI attacks. The work above assumed that the adversary has full knowledge of the system model, yielding perfect stealth attacks. However, the data of the system model is usually protected well and hard to be accessed by the adversary. In reality, the attacks are always executed with limited adversarial knowledge and have the possibility to be detected by the BDD under limited knowledge conditions. Thus for the vulnerability analysis, not only the attack resources needed by the attacker should be considered but also the detection probability of attacks needs to be computed. In addition, vulnerability and impact of attacks can be combined together in the notion of $risk$.  
In \cite{Sridhar2012}, a high-level risk assessment methodology for power system applications including SE was presented. However, risk analysis methods and tools combining vulnerability and impact assessment for data attacks are needed to implement risk assessment methodologies. 

In this paper, we extend our prior work reported in \cite{Pan2016} to formulate combined attacks with limited adversarial knowledge of the system model and conduct the risk analysis. 
In order to assess the risk, we first analyze vulnerability of SE with respect to attack resources needed by the adversary and calculate the detection probability of combined attacks. Next, we propose attack impact metric for evaluating attack impact on load estimate. Combining the results from vulnerability and impact assessment, we present the $risk$ which combined attacks bring to reliable system operation. We compare the vulnerability, impact and risk with those of FDI attacks. The simulation results show that combined attacks yield higher risk in majority of considered cases.

\subsection{Contributions and Outline}  

As far as we know, our work is the first one to conduct risk analysis of combined attacks with limited adversarial knowledge. Our contributions are listed as follows:

\begin{itemize}
\item[1)] The first part of vulnerability analysis is presented through the notion of security index \cite{Sandberg2010}, which corresponds to the minimum attack
resources needed by the attacker to compromise the measurements while keeping stealth. The power system is more vulnerable to attacks with smaller security index since such attacks can be executed with less resources. 
We show that, when availability attack and integrity attack have the same cost, the security indexes of combined attacks and FDI attacks coincide. 
 
\item[2)] Our second contribution is to address the detection probability problem of combined attacks with limited adversarial knowledge. Here we relax the full knowledge assumption which is commonly used in the literature.  
We show that the optimal combined attack with limited adversarial knowledge can still keep stealth under certain conditions. 
The empirical results also indicate that combined attacks have lower detection probability. 

\item[3)] We propose risk metric to quantify the risk of combined attacks with limited adversarial knowledge. For the attacks with the same security index, the risk metric is computed by multiplying 1) the probability of the attack not to be detected, with 2) the attack impact on load estimate. We particularly consider the attack impact on load estimate because the load estimates are inputs of other applications that compute optimal control actions in EMS. 
Based on the analysis of risk metrics of combined attacks and FDI attacks, we show that power system operations face higher risk under combined attacks.

\end{itemize}

The outline of the paper is as follows. Section II gives an introduction of SE 
and stealth FDI attacks mechanism. Section III extends the attack scenario to combined attacks and proposes security index with computational method for vulnerability analysis. In Section IV, the detectability of combined attacks with limited adversarial knowledge is discussed.
The risk metric is proposed to measure the risk of attacks in Section V with the analysis of the vulnerability and attack impact. 
Section VI presents empirical results from a power system use case. In section VII we conclude the paper.  

\subsection{Notation}

For an $m \times n$ matrix $H \in \mathbb{R}^{m \times n}$, we denote the $i$-th row of $H$ by $H(i,:)$. For a vector of $m$ values $a \in \mathbb{R}^{m}$, $a(i)$ is the $i$-th entry of $a$. By $diag(a)$, we denote an $m \times m$ diagonal matrix with the elements of vector $a$ on the main diagonal.

\section{Power System Model and Data Attacks}

In this section, we review the state estimation and BDD techniques and the stealth data attacks problem. 

\subsection{State Estimation}

The power system we consider has $n+1$ buses and $n_{t}$ transmission lines. 
The data collected by RTUs includes line power flow and bus power injection measurements. These $m$ 
measurements are denoted by $z = [z_{1},\ldots,z_{m}]^{T}$. The system state $x$ is the vector of phase angles and voltage magnitudes at all buses except the reference bus whose phase angle is set to be zero. For the analysis of cyber security and bad data detection in SE, it is customary to describe the dependencies of measurements and system state through an approximate model called DC power flow model \cite{TeixeiraSou2015}. In the DC power flow 
model, all the voltage magnitudes are assumed to be constant and the reactive power is completely neglected. Thus the vector $z$ refers to active power flow and injection measurements, and the state $x$ refers to bus phase angles only. There are $n$ phase angles to be estimated excluding the reference one, i.e. $x = [x_{1},\ldots,x_{n}]^{T}$. Hence, $z$ and $x$ are related by the equation
\begin{equation}
\label{eq: DC_SE}
z = P\left[\begin{matrix} WB^T\\ -WB^T\\ B_{0} W B^T\\ \end{matrix}\right]x + e := Hx + e,
\end{equation}
where $e \sim \mathcal{N}(0, R)$ is the measurement 
noise vector of independent zero-mean Gaussian variables with the covariance matrix $R = \mbox{diag}(\sigma_{1}^{2}, \ldots, \sigma_{m}^{2})$, $H \in \mathbb{R}^{m \times n}$ represents the system 
model, depending on the topology of the power network, the line parameters and the placement of RTUs. Here the topology is described by a directed incidence matrix $B_{0} \in \mathbb{R}^{(n+1) \times n_t}$ in which the directions of the lines can be arbitrarily specified \cite{TeixeiraSou2015}. Matrix $B \in \mathbb{R}^{n \times n_t}$ is the truncated incidence matrix with the row in $B_{0}$ corresponding to the reference bus removed. The line parameters are described by a diagonal matrix $W \in \mathbb{R}^{n_t \times n_t}$ with diagonal entries being the reciprocals of transmission line reactance. Matrix $P \in \mathbb{R}^{m \times (2n_t+n+1)}$ is a matrix stacked by the rows of identity matrices, indicating which 
power flows or bus 
injections are measured. 
Usually a large degree of redundancy of measurements is employed to make $H$ full rank. 


The state estimate $\hat{x}$ is obtained by the following weighted least squares (WLS) estimate:
\begin{equation}
\label{eq: WLS_DCSE} 
\hat{x} := \mbox{arg} \min\limits_{x} (z - Hx)^{T} R^{-1}(z - Hx),
\end{equation}
which can be solved as $\hat{x} = (H^{T}R^{-1}H)^{-1}H^{T}R^{-1}z := Kz.$
%
%

The estimated state $\hat{x}$ can be used to estimate the active power flows and injections by
\begin{equation}
\label{eq: hat_matrix} 
\hat{z} = H\hat{x} =HKz := Tz,
\end{equation}
where $T$ is the so-called hat matrix \cite{AburExposito2004}. The BDD scheme uses such estimated measurements to identify bad data by comparing $\hat{z}$ with $z$, see below.  

\subsection{Bad Data Detection}

Measurement data may be corrupted by random errors. 
Thus 
there is a built-in BDD scheme in EMS for bad data detection. 
The BDD is achieved by hypothesis tests 
using the statistical properties of the measurement residual:
\begin{equation}
\label{eq: residual}
r = z - \hat{z}=(I - T)z:=Sz=Se, 
\end{equation}
where $ r \in \mathbb{R}^{m}$ is the residual vector, 
$I \in \mathbb{R}^{m \times m}$ is an identity matrix and $S$ is the so-called residual sensitivity matrix \cite{AburExposito2004}.

We now introduce the $J(\hat{x})$-test based BDD. For the measurement error $e \sim \mathcal{N}(0 ,R)$, the new random variable $y=\sum\limits_{i}^{m} R_{ii}^{-1} e_{i}^{2}$ where $R_{ii}$ is the diagonal entry of the covariance matrix $R$ has a $\chi^{2}$ distribution with $m-n$ degrees of freedom. 
Note the quadratic cost function $J(\hat{x}) = \lVert R^{-1/2}r \rVert_{2}^{2}=\lVert R^{-1/2}Se \rVert_{2}^{2}$. 
For the independent $m$ measurements 
we have rank$(S)=m-n$, which implies that 
$J(\hat{x})$ has a so-called \emph{generalized chi-squared distribution} with $m-n$ degrees of freedom \cite{Jones1983}. The BDD 
uses the quadratic function as an approximation of $y$ and checks if it follows 
the distribution $\chi_{m-n}^{2}$. Defining $\alpha \in [0,1]$ as the significance level corresponding to 
the false alarm rate, 
and $\tau(\alpha)$ such that       
\begin{equation}\label{dect_alpha}
\int_{0}^{\tau(\alpha)} {f(x)dx} = 1 - \alpha, 
\end{equation}
where $f(x)$ is the probability distribution function (PDF) of $\chi_{m-n}^{2}$. Hence, the BDD scheme becomes
\begin{equation}
\label{bdd_test}
\left\{
\begin{array}{rcl}
&\text{Good data, if}& \lVert R^{-1/2}r \rVert_{2} \leq \sqrt{\tau(\alpha)}, \\
&\text{Bad data, if}& \lVert R^{-1/2}r \rVert_{2} > \sqrt{\tau(\alpha)}, \\
\end{array}
\right.
\end{equation}
%


\subsection{Stealth FDI Attacks}

The goal of an attacker is to perturb the SE while remaining hidden from the BDD. 
If only data integrity attacks are considered, the attacker could inject false data on a set of measurements, modifying the measurement vector $z$ into $z_{a}:= z+a$. Here the \emph{FDI attack vector} $a \in \mathbb{R}^{m}$ is the corruption added to the original measurement $z$. We have the following definition of a $k_{a}$-tuple FDI attack,
\begin{definition}[$k_{a}$-tuple FDI attack]
\label{def_FDI}
An attack with an FDI attack vector $a \in \mathbb{R}^{m}$ is called a $k_{a}$-tuple FDI attack if 
a number of $k_{a}$ measurements are injected with false data, i.e. $\lVert a \rVert_{0} = k_{a}$. 
\end{definition}

As shown in \cite{LiuNingReiter2009}, an attacker with full knowledge of the system model (i.e., the matrix $H$) and the capability to corrupt specific measurements can keep steath 
if the FDI attack vector follows $a=Hc$ where $c \in \mathbb{R}^{n}$ is non-zero. The corrupted measurements 
$z_{a}$ becomes $z_{a} = H(x+c)+e$. This leads to the state estimate perturbed by a degree of $c$, while the 
residual for BDD checking remains the same. 
It has been verified that such \emph{stealth FDI attacks} based on the DC 
model can be performed on a real SCADA/EMS testbed avoiding the 
bad data detection 
with full nonlinear AC power flow model \cite{TeixeiraDanSandbergEtAl2011}. 

To describe the vulnerability of SE to stealth FDI attacks, the security index is introduced as the minimum number of measurements that need to be corrupted by the attacker in order to keep stealth 
\cite{Sandberg2010}. 
The security index 
is given by  
\begin{align}\label{sec_idx_o}
\begin{split}
\alpha_{j}:= &\min\limits_{c} \quad \lVert a \rVert_{0} \\
&\begin{array}{r@{\quad}r@{}l@{\quad}l}
\mbox{s.t.}\quad a= Hc,\quad a(j) = \mu,\\
\end{array}
\end{split}
\end{align}
where $a(j)$ denotes the injected false data on measurement $j$, and $\mu$ is the non-zero attack magnitude determined by the attacker. The result $\alpha_{j}$ is the security index that quantifies the vulnerability of measurement $j$ to stealth FDI attacks. Here the computed $\alpha_{j}$ belongs to one of the FDI attacks with the minimum $k_{a}$ ($k_{a} = \alpha_{j}$) for measurement $j$. It is known that this optimization problem above is NP-hard (See \cite{HendrickxJohansson2014}). In \cite{TeixeiraSou2015}, the authors proposed an approach using the big M method to express \eqref{sec_idx_o} as a mixed integer linear programming (MILP) problem which can be solved with an appropriate solver,  
\begin{subequations}\label{sec_idx_bigM}   
\begin{align}
\alpha_{j}:= \min\limits_{c,y}\quad& \sum\limits_{i=1}^{m} y(i) \nonumber\\
\mbox{s.t.}\quad& Hc \leq My,\label{eq:10a}\\
 & -Hc \leq My,\label{eq:10b}\\
 & H(j,:)c = \mu,\label{eq:10c}  \\
 & y(i) \in \{0,1\}\quad\mbox{for all } i. \nonumber
\end{align}
\end{subequations} 

In \eqref{sec_idx_bigM}, $M$ is a constant scalar that is greater than the maximum absolute value of entries in $Hc^{*}$, for some optimal solution $c^{*}$ of \eqref{sec_idx_o}. At optimality, for any $i$ that $|H(i,:)c^{*}|=0$, the corresponding $y(i)$ is zero. Thus an optimal solution to \eqref{sec_idx_bigM} is exactly the same optimal solution to \eqref{sec_idx_o} with $y(i) = 1$ indicating that the measurement $i$ is corrupted by an FDI attack. 

\section{Stealth Combined Data Attacks}

FDI attacks 
are resource-intensive since the adversary needs to coordinate integrity attacks on a 
specific number of 
measurements. 
This usually gives the adversary more power than possible in practice \cite{KosutJiaThomasEtAl2011}. 
In reality, an attacker would try to reduce the attack resources and would prefer data availability attacks (e.g., DoS attacks, jamming attacks) since monitoring systems are always more vulnerable to this type of attacks \cite{Markovic2013}. Thus, we focus on the scenario that the adaversary would launch combined data integrity and availability attacks. 

\subsection{Combined Data Integrity and Availability Attacks}

For a large-scale SCADA system, missing data and failing RTUs 
are common \cite{Sandberg2010}. When some of the measurements are missing, 
the typical solution widely employed widely in SE 
is to use the remaining data 
before the system becomes ``unobservable''. Another solution 
is to use 
pseudo measurements (e.g., previous data, forecast information), but 
these measurements would still lose confidence in further time intervals as long as the availability attacks continue. The combined attacks we introduce here are assumed not to make system unobservable and lead to non-convergence of the SE algorithm but try to keep stealth against the BDD. Thus we keep the assumption in this paper that SE uses remaining data if availability attacks take place. We introduce the \emph{availability attack vector} $d \in \{0,1\}^{m}$ for the 
availability attacks and $d(i) = 1$ means that measurement $\mathit{i}$ is unavailable. 
Thus the model for remaining measurements and 
system state can be described by 

\begin{equation}
\label{eq: DC_SE_com}
z_{d} = H_{d}x + e_{d},
\end{equation}
where $e_{d} \in \mathbb{R}^{m}$ and $z_{d} \in \mathbb{R}^{m}$ are the noise vector and measurement vector respectively, and the entries of them are zero if the corresponding measurements are unavailable. Matrix $H_{d} \in \mathbb{R}^{m \times n}$ denotes the model of the remaining measurements and it is obtained from $\mathit{H}$ by replacing some rows with zero row vectors due to availability attacks on these measurements, i.e. $H_{d} :=(I - \mbox{diag}(d)) H$. We can further obtain the hat matrix and residual sensitivity matrix when availability attacks occur, 
\begin{equation}
\label{eq: K0} 
K_{d} := (H_{d}^{T}R^{-1} H_{d})^{-1}H_{d}^{T}R^{-1},
\end{equation}
\begin{equation}
\label{eq: T0} 
T_{d} := H_{d} K_{d}, \quad S_{d} := I  - T_{d}. 
\end{equation}

For the combined attacks, 
the attacker would still launch FDI attacks on the remaining measurements in concert with availability attacks, 
making 
$z_{d}$ 
changed into $z_{a,d}:= z_{d} + a$. Similarly, a $(k_{a}, k_{d})$-tuple combined attack can be defined as
\begin{definition}[$(k_{a}, k_{d})$-tuple combined attack]
\label{def_Com}
A combined attack with an FDI attack vector $a \in \mathbb{R}^{m}$ and an availability attack vector $d \in \{0,1\}^{m}$ described above is called a $(k_{a}, k_{d})$-tuple combined attack if 
$\lVert a \rVert_{0} = k_{a}$, $\lVert d \rVert_{0} = k_{d}$. 
\end{definition}

\subsection{Security Index for Combined Attacks}

Similar to the FDI attacks, if the attack vectors of a $(k_{a}, k_{d})$-tuple attack satisfy $a=H_{d}c$, such combined attacks can still keep stealth as the FDI attack vector $\mathit{a}$ lies on the column space of the matrix $\mathit{H}_{d}$. Using the formulation of security index in \eqref{sec_idx_o} for FDI attacks, we propose an intuitive security index for combined attacks as the minimum number of measurements that need to be compromised by the attacker, 
\begin{subequations}\label{sec_idx_com_o}
\begin{align}
\beta_{j}:= \min\limits_{c,d}\quad& \lVert a \rVert_{0} + \lVert d \rVert_{0} \nonumber\\
\mbox{s.t.}\quad& a = H_{d}c, \label{eq:14a}\\
 & H_{d} =(I - \mbox{diag}(d))H, \label{eq:14b}\\ 
 & a(j) = \mu, \label{eq:14c}\\
 & d(i) \in \{0,1\}\quad\mbox{for all } i. \nonumber
\end{align}
\end{subequations}

Here we also assume $a(j) = \mu$ where $\mu$ is the non-zero attack magnitude. 
The result $\beta_{j}$ is the security index 
that quantifies how vulnerable measurement $j$ is to combined attacks. The computed $\beta_{j}$ belongs to one of the combined attacks that have minimum $k_{a}+k_{d}$ ($k_{a}+k_{d} = \beta_{j}$) for measurement $j$. To solve this NP-hard problem above, we propose a computation solution which uses the big M method to formulate a MILP problem: 
\begin{subequations}\label{sec_idx_com_bigM} 
\begin{align}
\beta^{'}_{j}:= \min\limits_{c,w,d}\quad &\sum\limits_{i=1}^{m} w(i) + \sum\limits_{k=1}^{m} d(k) \nonumber\\
\mbox{s.t.}\quad& Hc \leq M(w+d), \label{eq:15a}\\
 & -Hc \leq M(w+d),  \\
 & H(j,:)c = \mu,  \label{eq:15c}\\
 & w(i) \in \{0,1\}\quad\mbox{for all } i, \label{eq:15d}\\
 & d(k) \in \{0,1\}\quad\mbox{for all } k, \label{eq:15e}
\end{align}
\end{subequations}
where $w, d \in \{0,1\}^{m}$ with $w(i) = 1$ and $d(k) = 1$ meaning FDI attack and data availability attack on measurement $i$ and $k$. 

The following theorem shows that the optimal solution to \eqref{sec_idx_com_o} can be obtained from the optimal solution of \eqref{sec_idx_com_bigM}.
\begin{theorem}\label{theo1_com_sec}
For any 
index $j\in\{1,\dots,m\}$ and non-zero $\mu$, let ($c^{*}$, $w^{*}$, $d^{*}$) be an optimal solution to \eqref{sec_idx_com_bigM}. Then an optimal solution to \eqref{sec_idx_com_o} can be computed as ($c^{*}$, $d^{*}$), and $\beta^{'}_{j} = \beta_{j}$.
\end{theorem}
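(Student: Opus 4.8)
The plan is to show that \eqref{sec_idx_com_o} and \eqref{sec_idx_com_bigM} are two exact reformulations of the same combinatorial program, namely $\min_{c}\{\lVert Hc\rVert_{0}:H(j,:)c=\mu\}$, and then to read off the explicit correspondence between their minimizers. The key observation is that in \eqref{sec_idx_com_o} the constraints $a=H_{d}c$ and $H_{d}=(I-\mbox{diag}(d))H$ force $a=(I-\mbox{diag}(d))Hc$, that is, $a(i)=(1-d(i))(Hc)(i)$ for every $i$. Hence a measurement $i$ is \emph{compromised} by the combined attack precisely when $(Hc)(i)\neq 0$: if moreover $d(i)=0$ it contributes one unit to $\lVert a\rVert_{0}$, and if instead $d(i)=1$ it contributes one unit to $\lVert d\rVert_{0}$; while whenever $(Hc)(i)=0$ an optimal solution never sets $d(i)=1$, since that would only raise the objective. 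Therefore at optimality $\lVert a\rVert_{0}+\lVert d\rVert_{0}=\lVert Hc\rVert_{0}$, and the requirement $a(j)=\mu\neq 0$ is equivalent to imposing both $d(j)=0$ and $H(j,:)c=\mu$. This identifies \eqref{sec_idx_com_o} with $\min_{c}\{\lVert Hc\rVert_{0}:H(j,:)c=\mu\}$.

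Next I would treat \eqref{sec_idx_com_bigM} in the same spirit. With $M$ chosen larger than the largest absolute entry of $Hc^{*}$ over an optimal $c^{*}$ --- the standard big-$M$ requirement already invoked for \eqref{sec_idx_bigM} --- the big-$M$ inequalities in \eqref{sec_idx_com_bigM} say exactly that $w(i)+d(i)\geq 1$ whenever $(Hc)(i)\neq 0$, and are vacuous when $(Hc)(i)=0$. Since the objective $\sum_{i}w(i)+\sum_{k}d(k)$ is minimized, any optimal $(c^{*},w^{*},d^{*})$ must have $w^{*}(i)+d^{*}(i)$ equal to $1$ when $(Hc^{*})(i)\neq 0$ and equal to $0$ otherwise, with at most one of $w^{*}(i)$ and $d^{*}(i)$ being $1$; hence $\beta'_{j}=\lVert Hc^{*}\rVert_{0}$. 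Because \eqref{eq:15c} gives $(Hc^{*})(j)=\mu\neq 0$, component $j$ is covered by exactly one of $w^{*}(j),d^{*}(j)$, and since the objective and the big-$M$ inequalities treat $w$ and $d$ symmetrically there is always an optimal solution with $d^{*}(j)=0$ and $w^{*}(j)=1$; the theorem is to be read for such a representative.

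With both reductions at hand, I would establish $\beta_{j}=\beta'_{j}$ by two matching constructions. For $\beta'_{j}\leq\beta_{j}$: given $(c,d)$ optimal for \eqref{sec_idx_com_o} with $a=(I-\mbox{diag}(d))Hc$, set $w(i):=1$ for those $i$ with $(Hc)(i)\neq 0$ and $d(i)=0$, and $w(i):=0$ otherwise; then $(c,w,d)$ is feasible for \eqref{sec_idx_com_bigM} (the big-$M$ inequalities hold by the choice of $M$, and $H(j,:)c=\mu$ since $a(j)=\mu$ forces $d(j)=0$) with objective $\lVert a\rVert_{0}+\lVert d\rVert_{0}=\beta_{j}$. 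For the reverse inequality together with the optimizer claim: take $(c^{*},w^{*},d^{*})$ optimal for \eqref{sec_idx_com_bigM} with $d^{*}(j)=0$, put $a^{*}:=(I-\mbox{diag}(d^{*}))Hc^{*}$ and $H_{d^{*}}:=(I-\mbox{diag}(d^{*}))H$; then $(c^{*},d^{*})$ satisfies $a^{*}=H_{d^{*}}c^{*}$, $a^{*}(j)=(1-d^{*}(j))(Hc^{*})(j)=\mu$ (using $H(j,:)c^{*}=\mu$ and $d^{*}(j)=0$), and the binary constraint, and by the support identities $\lVert a^{*}\rVert_{0}+\lVert d^{*}\rVert_{0}=\sum_{i}w^{*}(i)+\sum_{k}d^{*}(k)=\beta'_{j}$, so $(c^{*},d^{*})$ is feasible for \eqref{sec_idx_com_o} with value $\beta'_{j}$. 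Combining the two inequalities gives $\beta_{j}=\beta'_{j}$ and shows $(c^{*},d^{*})$ is optimal for \eqref{sec_idx_com_o}.

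I expect the main obstacle to be the big-$M$ exactness step: one has to argue carefully that at an optimum of \eqref{sec_idx_com_bigM} the auxiliary binaries cannot be inflated beyond the support of $Hc$, cannot jointly cover a single component, and that one constant $M$ simultaneously serves --- this is the familiar delicate point of big-$M$ formulations and is exactly where the hypothesis on $M$ enters, mirroring the derivation of \eqref{sec_idx_bigM} from \eqref{sec_idx_o}. A secondary point needing care is the bookkeeping at index $j$: \eqref{sec_idx_com_bigM} does not pin down whether component $j$ is counted in $w$ or in $d$, so one must use the trivial exchange $w(j)\leftrightarrow d(j)$ to pass to an optimal solution with $d^{*}(j)=0$, which is precisely what makes $(c^{*},d^{*})$ feasible for \eqref{sec_idx_com_o}.
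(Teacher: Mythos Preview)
Your proposal is correct and rests on the same core observation as the paper: both \eqref{sec_idx_com_o} and \eqref{sec_idx_com_bigM} are exact encodings of $\min_{c}\{\lVert Hc\rVert_{0}:H(j,:)c=\mu\}$, so their optimal values and optimizers coincide. The paper's execution is more compact: it directly applies big-$M$ to the constraint $a=(I-\mbox{diag}(d))Hc$ to get $|(1-d(i))H(i,:)c|\leq Mw(i)$, observes that for binary $d(i)$ this is equivalent to $|H(i,:)c|\leq M(w(i)+d(i))$, and then matches objectives via $\lVert a\rVert_{0}+\lVert d\rVert_{0}=\sum w(i)+\sum d(i)$. You instead reduce each problem separately to the common $\lVert Hc\rVert_{0}$ minimization and then construct feasible solutions in both directions to sandwich $\beta_{j}=\beta'_{j}$.

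Your longer route buys two things the paper leaves implicit. First, it makes the big-$M$ exactness step explicit: you argue directly that at an optimum $w^{*}(i)+d^{*}(i)$ equals $1$ on the support of $Hc^{*}$ and $0$ off it, rather than relying on the reader to supply this. Second, you surface the bookkeeping at index $j$: \eqref{sec_idx_com_bigM} does not by itself force $d^{*}(j)=0$, so to make $(c^{*},d^{*})$ feasible for \eqref{sec_idx_com_o} one must first pass, via the $w(j)\leftrightarrow d(j)$ symmetry, to an optimal representative with $d^{*}(j)=0$. The paper's proof does not mention this step; your treatment is more careful on this point.
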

\begin{proof}
The proof follows by re-writing \eqref{sec_idx_com_o} as \eqref{sec_idx_com_bigM}. First, note that the constraint of~\eqref{sec_idx_com_o}, $a=(I-\mbox{diag}(\mathit{d}))Hc$, can be formulated as a set of inequality constraints with auxiliary binary variables by using the big M method, yielding $ -M w \leq (I-\mbox{diag}(\mathit{d}))Hc \leq M w $, where $w\in \{0,1\}^{m}$ and $\| a \|_0 = \sum{w(i)}$. Since $d$ is a vector of binary variables, the pair of inequality constraints pertaining the $i$-th measurement can be written as $ |(1-d(i)) H(i,:)c| \leq M w(i)$. The latter can be read as
\begin{equation*}
\left\{ \begin{array}{ll}
H(i,:)c = 0, & \mbox{if } w(i)=d(i)=0,\\
|H(i,:)c| \leq M, & \mbox{if } w(i) = 1\mbox{ or } d(i)=1,
\end{array}
\right.
\end{equation*}
which can be rewritten as $|H(i,:)c |\leq M(d(i) + w(i))$. Hence, recalling that $a(i) = (1-d(i)) H(i,:) c$, we conclude that the constraints of \eqref{sec_idx_com_o} can be equivalently re-written as the constraints of \eqref{sec_idx_com_bigM}. The proof concludes by noting that the objective functions of both problems satisfy the equality $\|a\|_0+\|d\|_0 = \sum w(i) + \sum d(i)$.
\end{proof}
%
\begin{corollary}\label{Co1_FDIvsCom}
For any 
index $j\in\{1,\dots,m\}$ and non-zero $\mu$, let ($c^{*}$, $w^{*}$, $d^{*}$) be an optimal solution to \eqref{sec_idx_com_bigM}. Then an optimal solution to \eqref{sec_idx_o} can be computed as $c^{*}$, and $\alpha_{j} = \beta_{j}$. 
\end{corollary}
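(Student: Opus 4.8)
The plan is to prove the corollary by sandwiching: establish $\beta_{j}\le\alpha_{j}$ and $\alpha_{j}\le\beta_{j}$ separately, and then read off optimality of $c^{*}$ for \eqref{sec_idx_o}. Theorem~\ref{theo1_com_sec} is used only to convert the MILP optimizer $(c^{*},w^{*},d^{*})$ of \eqref{sec_idx_com_bigM} into an optimizer $(c^{*},d^{*})$ of the original combined-attack problem \eqref{sec_idx_com_o}.

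For $\beta_{j}\le\alpha_{j}$, I would observe that an FDI attack is the special case of a combined attack with no availability component: if $\bar c$ is optimal for \eqref{sec_idx_o}, then $(\bar c,\,d=0)$ is feasible for \eqref{sec_idx_com_o}, because $H_{d}=(I-\mathrm{diag}(0))H=H$, so $a=H_{d}\bar c=H\bar c$ and $a(j)=\mu$; its objective value is $\|H\bar c\|_{0}+0=\alpha_{j}$, whence $\beta_{j}\le\alpha_{j}$.

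For the reverse inequality, take the optimizer $(c^{*},d^{*})$ of \eqref{sec_idx_com_o} supplied by Theorem~\ref{theo1_com_sec}, and write $a:=H_{d^{*}}c^{*}$, so that $\|a\|_{0}+\|d^{*}\|_{0}=\beta_{j}$. The two observations that drive the argument are: (i) since $a(j)=(1-d^{*}(j))\,H(j,:)c^{*}=\mu\neq 0$, necessarily $d^{*}(j)=0$ and hence $H(j,:)c^{*}=\mu$, so $a':=Hc^{*}$ is feasible for \eqref{sec_idx_o}; and (ii) for every $i$ with $d^{*}(i)=0$ one has $a(i)=H(i,:)c^{*}$, so the support of $Hc^{*}$ is contained in $\{i:a(i)\neq 0\}\cup\{i:d^{*}(i)=1\}$, which gives $\|a'\|_{0}=\|Hc^{*}\|_{0}\le\|a\|_{0}+\|d^{*}\|_{0}=\beta_{j}$. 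Thus $\alpha_{j}\le\|a'\|_{0}\le\beta_{j}$. Combining the two bounds yields $\alpha_{j}=\beta_{j}$, and since $a'$ is feasible for \eqref{sec_idx_o} with $\|a'\|_{0}\le\alpha_{j}$, it must attain the minimum, so $c^{*}$ (via $a=Hc^{*}$) is an optimal solution of \eqref{sec_idx_o}.

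The proof is short; the one point to get right is observation (ii) — the correct FDI vector to compare against is $Hc^{*}$, not $a=H_{d^{*}}c^{*}$. The intuition is that in the accounting of \eqref{sec_idx_o} the attacker may ``re-activate'' every jammed measurement at no additional cost, since each such measurement was already charged in $\|d^{*}\|_{0}$, so the jamming part of a combined attack can always be traded back into the FDI part without increasing the count. An alternative route, closer to the big-M machinery, is to collapse the pair $(w^{*},d^{*})$ into the single indicator $y^{*}:=\min(1,\,w^{*}+d^{*})$, check $(c^{*},y^{*})$ is feasible for \eqref{sec_idx_bigM} with objective $\le\sum_i w^{*}(i)+\sum_i d^{*}(i)=\beta'_{j}$, and invoke the stated equivalence between \eqref{sec_idx_o} and \eqref{sec_idx_bigM}; the only subtlety there is ensuring the constant $M$ is chosen large enough to be simultaneously valid for the optimizers of \eqref{sec_idx_bigM} and \eqref{sec_idx_com_bigM}.
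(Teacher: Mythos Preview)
Your proof is correct, and your main argument takes a genuinely different route from the paper's. The paper works entirely at the level of the two MILPs: it observes that \eqref{sec_idx_bigM} and \eqref{sec_idx_com_bigM} have identical constraints and objectives once one identifies $y$ with $w+d$, so that from an optimizer $(c^{*},w^{*},d^{*})$ of \eqref{sec_idx_com_bigM} one immediately gets an optimizer $(c^{*},y^{*})$ of \eqref{sec_idx_bigM} with $y^{*}=w^{*}+d^{*}$ (this lies in $\{0,1\}^{m}$ at optimality, since $w^{*}(i)=d^{*}(i)=1$ would be wasteful), hence $\alpha_{j}=\beta'_{j}$; the equivalence of \eqref{sec_idx_bigM} with \eqref{sec_idx_o} and Theorem~\ref{theo1_com_sec} then give $\alpha_{j}=\beta_{j}$. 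Your alternative route at the end is exactly this argument, with the cosmetic (and safe) replacement $y^{*}=\min(1,w^{*}+d^{*})$.

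Your primary sandwiching argument, by contrast, works directly with the original problems \eqref{sec_idx_o} and \eqref{sec_idx_com_o} and never touches the big-$M$ machinery except to import $(c^{*},d^{*})$ via Theorem~\ref{theo1_com_sec}. The payoff is conceptual clarity: the inequality $\beta_{j}\le\alpha_{j}$ is the trivial ``FDI is a combined attack with $d=0$'' inclusion, while $\alpha_{j}\le\beta_{j}$ makes explicit the mechanism that every jammed measurement can be re-activated and absorbed into the FDI support of $Hc^{*}$ at no extra cost. This also sidesteps the subtlety you flag about choosing $M$ large enough to be simultaneously valid for both MILPs. The paper's approach is shorter and more mechanical; yours explains \emph{why} the two indices must coincide.
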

\begin{proof}
The proof follows straightforwardly from Theorem~\ref{theo1_com_sec}, which establishes that an optimal solution to \eqref{sec_idx_com_o} can be obtained from an optimal solution to \eqref{sec_idx_com_bigM}: comparing \eqref{sec_idx_com_bigM} and \eqref{sec_idx_bigM}, we can easily see that an optimal solution to \eqref{sec_idx_bigM} can be computed as ($c^{*}$, $y^{*}$) with $y^{*}= w^{*} + d^{*}$, and $\alpha_{j}=\beta^{'}_{j}$. Since \eqref{sec_idx_bigM} provides the exact solution to \eqref{sec_idx_o}, an optimal solution to \eqref{sec_idx_o} can be computed as $c^{*}$, and also $\alpha_{j} =\beta^{'}_{j} = \beta_{j}$.
\end{proof}
Corollary~\ref{Co1_FDIvsCom} implies that a set of compromised measurements is an optimal solution to 
\eqref{sec_idx_com_o} if and only if this set is an optimal solution to 
\eqref{sec_idx_o}, and the two security indexes $\beta_{j}$ and $\alpha_{j}$ coincide. 
In fact, in \cite{Sou2013} it was shown that the set of compromised measurements in a $k_a$-tuple FDI attack obtained by solving 
\eqref{sec_idx_o} is a sparsest \emph{critical tuple} containing the target measurement $j$. A sparsest critical tuple is characterized by the measurements that do not belong to a critical tuple of lower order. A critical tuple contains a set of measurements, where removal all of them will cause the system to be unobservable. If any subset of the critical tuple is removed, it would not lead to the loss of observability 
\cite{AburExposito2004}. According to Corollary~\ref{Co1_FDIvsCom} and its proof, we can see that the set of compromised measurements of FDI attacks in this critical tuple is also an optimal solution to the security index problem \eqref{sec_idx_com_o} of combined attacks. The interpretation of the security index problem 
as a critical tuple problem 
provides the means for comparing security indexes of attacks with full and limited adversarial knowledge; 
see Section IV-C for details.

The security indexes derived so far in \eqref{sec_idx_o} and \eqref{sec_idx_com_o} could identify the compromised measurements set of attacks but 
did not consider the attack costs. 
In what follows, we include the costs in the formulation. 
To simplify the discussion, we assume that the 
availability and integrity attacks have the 
costs $\mathit{C}_{A}$ and $\mathit{C}_{I}$, respectively, per measurement. The worst case for power grids is that the adversary succeeds with minimum attack resources. Under these attack costs, we formulate a security index for attack resources of combined attacks as
\begin{equation} 
\label{sec_idx_com_cost}  
\begin{aligned}
\gamma_{j}^{a,d}:=&\min\limits_{c,w,d}\quad \sum\limits_{i=1}^{m} \mathit{C}_{I}w(i) + \sum\limits_{k=1}^{m} \mathit{C}_{A}d(k)\\
\mbox{s.t.} & \quad\eqref{eq:15a}-\eqref{eq:15e}.
\end{aligned}
\end{equation}

By making vector $d$ in \eqref{sec_idx_com_cost} to be zero, we can get the security index $\gamma_{j}^{a}$ for FDI attacks.  
We can also see that the set of compromised measurements from the optimal solution of \eqref{sec_idx_com_cost} is also the optimal solution to \eqref{sec_idx_com_o} and \eqref{sec_idx_o}. 
As previously discussed, it is reasonable to assume that availability attacks can cost less attack resources compared with integrity attacks. If we take the values that satisfy $\mathit{C}_{A} < \mathit{C}_{I}$, the optimal solution of $w^{*}$ and $d^{*}$ in \eqref{sec_idx_com_cost}, w.r.t. measurement $j$, would lead to $\sum w^{*}(i) = 1$ and $\sum d^{*}(k) = \beta_j-1$. This means that the optimal combined attack in the case of $\mathit{C}_{A} < \mathit{C}_{I}$ is to corrupt one measurement with an integrity attack and make other measurements in this critical tuple unavailable. This statement is made formal in the following proposition which will be validated in Section VI-A.
\begin{proposition}\label{pro1_optcost_attack}
When $\mathit{C}_{A} < \mathit{C}_{I}$, the optimal strategy of combined attack is to inject false data on the targeted measurement $j$ and make other measurements in the critical tuple unavailable to the SE, yielding a (1,$\beta_j - 1$)-tuple combined attack with optimal attack cost $\gamma_{j}^{a,d} = C_I + (\beta_j - 1)C_A$. 
\end{proposition}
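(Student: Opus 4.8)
\textit{Proof proposal.}

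The plan is to prove the claimed cost by a matching pair of bounds on the optimal value of \eqref{sec_idx_com_cost} and to read off the attack structure from the minimizer. I would first fix an optimal solution $(c^{*},w^{*},d^{*})$ of \eqref{sec_idx_com_cost} and record two structural facts. First, the supports of $w^{*}$ and $d^{*}$ are disjoint: if $w^{*}(i)=d^{*}(i)=1$ for some $i$, then resetting $d^{*}(i):=0$ keeps every big-M inequality \eqref{eq:15a} valid (the right-hand side $M(w^{*}(i)+d^{*}(i))$ stays at least $M$) while lowering the objective by $C_{A}>0$, contradicting optimality. Second, by Corollary~\ref{Co1_FDIvsCom} and the critical-tuple discussion following it, the set of compromised measurements $\Gamma:=\mathrm{supp}(w^{*})\cup\mathrm{supp}(d^{*})$ is also an optimal solution of \eqref{sec_idx_o}, i.e. a sparsest critical tuple containing $j$, so $|\Gamma|=\beta_{j}$. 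Together these give $k_{a}+k_{d}=|\mathrm{supp}(w^{*})|+|\mathrm{supp}(d^{*})|=|\Gamma|=\beta_{j}$.

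Next I would establish $k_{a}\geq1$, i.e. that the targeted measurement $j$ itself must be hit by an integrity attack. Since a combined attack on $j$ falsifies the reading of $j$ to $\mu\neq0$, measurement $j$ must remain available to the SE; formally, the recovered FDI vector $a=(I-\mbox{diag}(d^{*}))Hc^{*}$ satisfies $a(j)=(1-d^{*}(j))\,H(j,:)c^{*}=\mu\neq0$, which forces $d^{*}(j)=0$, and then $H(j,:)c^{*}=\mu\neq0$ together with the big-M feasibility constraints of \eqref{sec_idx_com_cost} forces $w^{*}(j)=1$. Hence $k_{a}\geq1$. Writing the optimal cost as $C_{I}k_{a}+C_{A}k_{d}=C_{A}\beta_{j}+(C_{I}-C_{A})k_{a}$ and using $C_{I}-C_{A}>0$ with $k_{a}\geq1$ yields $\gamma_{j}^{a,d}\geq C_{I}+(\beta_{j}-1)C_{A}$, with equality only when $k_{a}=1$ and $k_{d}=\beta_{j}-1$.

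For the matching upper bound I would exhibit a feasible $(1,\beta_{j}-1)$-tuple attack. Let $c$ be an optimal solution of \eqref{sec_idx_o}; then $Hc$ is supported on a sparsest critical tuple $\Gamma\ni j$ with $(Hc)(j)=\mu$, and the big-M constant is by construction at least $\max_{i}|H(i,:)c|$. Put $w(j)=1$ with $w(i)=0$ for $i\neq j$, and $d(i)=1$ exactly on $\Gamma\setminus\{j\}$. Then $a:=(I-\mbox{diag}(d))Hc$ vanishes on $\Gamma\setminus\{j\}$ (those rows are zeroed) and off $\Gamma$ (where $Hc$ already vanishes), and equals $\mu$ at $j$; hence $\|a\|_{0}=1$, $a=H_{d}c$ (so the attack is stealth), and $\|d\|_{0}=\beta_{j}-1$. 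One checks readily that $(c,w,d)$ is feasible for \eqref{sec_idx_com_cost} with cost $C_{I}+(\beta_{j}-1)C_{A}$, matching the lower bound; so the optimum is attained exactly by this $(1,\beta_{j}-1)$-tuple combined attack and $\gamma_{j}^{a,d}=C_{I}+(\beta_{j}-1)C_{A}$.

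The main obstacle is the lower bound, and within it the step $k_{a}\geq1$: this is the only place where one must invoke that the measurement whose reading is being falsified cannot simultaneously be discarded from the estimator, so that at least one of the more expensive integrity attacks is unavoidable regardless of the cost ratio. A secondary point --- unless one simply quotes the discussion preceding the proposition --- is to rule out that a feasible solution undercuts the bound by spreading the corruption over a critical tuple larger than $\beta_{j}$ in order to use more of the cheaper availability attacks; this cannot help, because each additional measurement in the tuple adds at least $C_{A}>0$ to the cost while $k_{a}$ can never drop below $1$. The remaining ingredients --- disjoint supports and $k_{a}+k_{d}=\beta_{j}$ --- follow routinely from Corollary~\ref{Co1_FDIvsCom} and the big-M construction.
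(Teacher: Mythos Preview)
The paper does not give a formal proof of Proposition~\ref{pro1_optcost_attack}. Its entire justification is the short informal passage immediately preceding the proposition (``the set of compromised measurements from the optimal solution of \eqref{sec_idx_com_cost} is also the optimal solution to \eqref{sec_idx_com_o} and \eqref{sec_idx_o}'' together with ``if $C_A<C_I$ the optimal solution would lead to $\sum w^{*}(i)=1$ and $\sum d^{*}(k)=\beta_j-1$''), and it then defers to numerical validation in Section~VI-A. Your argument is therefore substantially more rigorous than what appears in the paper, and your overall strategy --- a lower bound via $k_a\geq 1$ and $k_a+k_d=\beta_j$, followed by an explicit $(1,\beta_j-1)$ construction for the matching upper bound --- is correct and is exactly the content the paper leaves implicit.

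Two technical remarks are worth recording. First, your appeal to Corollary~\ref{Co1_FDIvsCom} for $|\Gamma|=\beta_j$ is slightly indirect: that corollary concerns the \emph{unweighted} MILP \eqref{sec_idx_com_bigM}, whereas you are working with the weighted problem \eqref{sec_idx_com_cost}. The paper simply asserts the extension in the sentence quoted above; your closing paragraph correctly supplies the missing reason (any extra measurement beyond a sparsest critical tuple adds at least $C_A>0$ while $k_a$ cannot drop below $1$), so the logic closes. Second, and this is a subtlety in the paper's MILP encoding rather than in your reasoning: constraint \eqref{eq:15c} reads $H(j,:)c=\mu$, not $a(j)=\mu$, so the MILP \eqref{sec_idx_com_cost} as literally written does not forbid $d(j)=1$, and with $C_A<C_I$ a purely symbolic optimizer could return $d^{*}(j)=1$, $w^{*}(j)=0$. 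You handle this the right way, by going back to the original attack model \eqref{sec_idx_com_o} where $a(j)=(1-d(j))H(j,:)c=\mu$ forces $d(j)=0$; that is precisely the semantic constraint the proposition is about, and it is the only place where one genuinely needs $k_a\geq 1$.
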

%


\section{Attacks with Limited Adversarial Knowledge}

In this section we consider the scenario in which the adversary has limited knowledge of the system model and discuss how this affects the detectability of combined attacks. 

\subsection{Relaxing Assumption on Adversarial Knowledge}

For the combined attacks 
above, the adversary is assumed to have full knowledge of 
$H$ in \eqref{eq: DC_SE} that includes the topology of the power network, the placement of RTUs and the transmission line reactance. %
This 
system 
data is usually kept in the database of control center, which is 
difficult to be accessed by the attacker. We extend the previous analysis by replacing the full knowledge assumption. Hence, in what follows the attacker only has limited knowledge of the system model. 
An attacker could acquire limited knowledge as a result of 
analyzing 
an out-dated or estimated model using power network topoloy data but limitd information of transmission line parameters \cite{Teixeira2010}\cite{Rahman2012}.

Looking at the problem from the attacker's perspective, we denote the perturbed system model as $\tilde{H}$, such that 
\begin{equation}
\label{err_H}
\tilde{H} = H + \Delta H, 
\end{equation}
where $\Delta H \in \mathbb{R}^{m \times n}$ denotes the part of model uncertainty. We still consider that the attacker uses the same linear policies to compute attack vectors, i.e. $a=\tilde{H}_{d}c$ for combined attacks and $a=\tilde{H}c$ for FDI attacks 
and $\tilde{H}_{d} := (I-\mbox{diag}(d))\tilde{H}$. 
  
\subsection{Detectability of Data Attacks}

\subsubsection{Combined Attacks}

When the measurements are corrupted by a $(k_{a}, k_{d})$-tuple attack, the measurement residual $r(a,d)$ can be written as
\begin{equation}\label{residual_attack}
r(a,d) = S_{d}z_{a,d}=S_{d}e_{d} + S_{d}a.
\end{equation}

As discussed in Section III-B, when the attack vectors of the 
combined attack 
satisfy $a=H_{d}c$, the residual $r(a,d) = S_{d}e_{d} + S_{d}H_{d}c = S_{d}e_{d}$ due to $S_{d}H_{d}=0$, then the residual is not affected by $a$ and no additional alarms are triggered; the BDD treats the measurements attacked by availability attacks as a case of missing data. However, for the attack with limited knowledge, 
the attack vector $a$ becomes $a=\tilde{H}_{d}c$ and $S_{d}a$ may be non-zero. In this case,  
the 
residual is incremented and the attack can be detected with some possibility. 
 
Note that the quadratic cost function with the 
combined attack becomes $J_{a,d}(\hat{x}) = \lVert R^{-1/2}S_{d}e_{d} + R^{-1/2}S_{d}a \rVert_{2}^{2}$. 
Here the mean of $( R^{-1/2}S_{d}e_{d} + R^{-1/2}S_{d}a)$ is non-zero $R^{-1/2}S_{d}a$ incremented by the attack.  
Recalling the $J(\hat{x})$-test based BDD, 
$J_{a,d}(\hat{x})$ has a \emph{generalized non-central chi-squared distribution} with $m - n - k_{d}$ degrees of freedom under the combined attack. 
We use $J_{a,d}(\hat{x})$ as an approximation of having the \emph{non-central chi-squared distribution} $\chi_{m-n-k_{d}}^{2}(\lVert R^{-1/2}S_{d}a \rVert_{2}^{2})$ to calculate the detection probability, where $\lambda_{a,d} = \lVert R^{-1/2}S_{d}a \rVert_{2}^{2}$ is the non-centrality parameter. Further we will validate such approximation using empirical results from Monte Carlo simulation in Section VI-B. 
We can further obtain
\begin{equation}\label{nchi_pdf_com}
\int_{0}^{\tau_{d}(\alpha)} {f_{\lambda_{a,d}}(x)dx} = 1 - \delta_{a,d}, 
\end{equation}
where $f_{\lambda_{a,d}}(x)$ is the PDF of $\chi_{m-n-k_d}^{2}(\lVert R^{-1/2}S_{d}a \rVert_{2}^{2})$, $\tau_{d}(\alpha)$ is the threshold set in the BDD using \eqref{dect_alpha} but with the PDF of $\chi_{m-n-k_{d}}^{2}$, 
and $\delta_{a,d}$ is the detection probability. 

\subsubsection{FDI Attacks}

For a $k_{a}$-tuple FDI attack with limited knowledge, the quadratic function $J_{a}(\hat{x})$ can also be approximated to have a non-central chi-squared distribution but with $m-n$ degrees of freedom, namely the distribution $\chi_{m-n}^{2}(\lVert R^{-1/2}Sa \rVert_{2}^{2})$. Similar to \eqref{nchi_pdf_com}, the detection probability can be computed by solving
\begin{equation}\label{nchi_pdf}
\int_{0}^{\tau(\alpha)} {f_{\lambda_{a}}(x)dx} = 1 - \delta_{a}, 
\end{equation}
where $\lambda_{a} = \lVert R^{-1/2}Sa \rVert_{2}^{2}$ denotes the non-centrality parameter,  
$\tau(\alpha)$ is the threshold set in the BDD using \eqref{dect_alpha}, 
and $\delta_{a}$ is the detection probability of the FDI attack. 

\subsection{Special Case: Attacks with Structured Model Uncertainty}

An interesting analysis can be made to understand what the model uncertainty $\Delta H$ is to the adversary. As stated in \cite{Teixeira2010}, 
the scenarios where the uncertainty is more structured are of greater interest. Here we assume that the attacker knows the exact topology of the power network and the placement of RTUs, but has to estimate the line parameters. This assumption is feasible since the attacker can analyze the topology according to the breaker status data and compute the model based on available power flow measurements, 
while usually the attacker has limited access to the knowledge of the exact length of the transmission line and type of the conductor being used \cite{Rahman2012}. Thus the 
model with such structured uncertainty becomes
\begin{equation}
\label{structured_uncertainty}
\tilde{H}=P \left[\begin{matrix} \tilde{W}B^T\\ -\tilde{W}B^T\\ B_{0} \tilde{W} B^T\\ \end{matrix}\right].
\end{equation}
where $\tilde{W}$ is derived from $W$ but with errors. 
Now we consider the security index of attacks w.r.t. $\tilde{H}$ in \eqref{structured_uncertainty}. As we have discussed in Section III-B, the security index problem can be interpreted as a critical tuple problem. 
In the remaining part of this paper we adopt the following assumption,
\begin{assumption}\label{assum1}
The system with perturbed model $\tilde{H}$ in \eqref{structured_uncertainty} has the same sets of critical tuples as the system with original model $H$ in \eqref{eq: DC_SE}.
\end{assumption}

Assumption~\ref{assum1} is expected to hold in the case that the system with $H$ in \eqref{eq: DC_SE} is topologically observable \cite{Krumpholz1980}. Defining the security indexes for compromised measurements set under structured uncertainty model as $\tilde{\alpha}_j$ and $\tilde{\beta}_j$, 
the following theorem 
shows that the security index remains the same although the model 
is perturbed with structured uncertainty.          
\begin{theorem}\label{the2_sec_conicide}
For any measurement index $j\in\{1,\dots,m\}$ and non-zero $\mu$, under Assumption~\ref{assum1}, let ($\tilde{c}^{*}$, $\tilde{w}^{*}$, $\tilde{d}^{*}$) be an optimal solution to \eqref{sec_idx_com_bigM} w.r.t. $\tilde{H}$ in \eqref{structured_uncertainty}. 
Then there exists some $c^{*}$ such that ($c^{*}$, $w^{*}$, $d^{*}$) with $w^{*}=\tilde{w}^{*}$ and $d^{*}=\tilde{d}^{*}$ is an optimal solution to \eqref{sec_idx_com_bigM} w.r.t. $H$ in \eqref{eq: DC_SE}, ($c^{*}$, $y^{*}$) with $y^{*}=\tilde{w}^{*} + \tilde{d}^{*}$ is an optimal solution to \eqref{sec_idx_bigM} w.r.t. $H$ in \eqref{eq: DC_SE}, 
and $\tilde{\beta}_j= \beta_j = \alpha_j = \tilde{\alpha}_j$.  
\end{theorem}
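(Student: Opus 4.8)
The plan is to reduce the statement to the \emph{critical tuple} interpretation of the security index developed just after Corollary~\ref{Co1_FDIvsCom}, and then to use Assumption~\ref{assum1} to transfer that purely combinatorial object from $\tilde{H}$ to $H$, rebuilding the numerical attack direction $c^{*}$ for $H$ by a rank/null-space argument.

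First I would note that Corollary~\ref{Co1_FDIvsCom} (and Theorem~\ref{theo1_com_sec}) are model-agnostic: their proofs only compare the structure of \eqref{sec_idx_com_bigM}, \eqref{sec_idx_bigM}, \eqref{sec_idx_com_o} and \eqref{sec_idx_o}, so they apply verbatim with $\tilde{H}$ in place of $H$. Hence $\alpha_j=\beta_j$ and $\tilde{\alpha}_j=\tilde{\beta}_j$, and it suffices to establish $\beta_j=\tilde{\beta}_j$ together with the claimed reconstruction. Next, recall from the discussion following Corollary~\ref{Co1_FDIvsCom} (via \cite{Sou2013}) that the support of the compromised-measurement set $\tilde{w}^{*}+\tilde{d}^{*}$ of any optimal solution of \eqref{sec_idx_com_bigM} w.r.t. $\tilde{H}$ is a sparsest critical tuple $\tilde{\mathcal{I}}$ of $\tilde{H}$ containing $j$, with $\tilde{\beta}_j=|\tilde{\mathcal{I}}|$; and the same holds for $H$, so $\beta_j$ equals the cardinality of a sparsest critical tuple of $H$ containing $j$. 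By Assumption~\ref{assum1} the families of critical tuples of $\tilde{H}$ and $H$ coincide, so $\tilde{\mathcal{I}}$ is a critical tuple of $H$ containing $j$, and — the families being identical — it is a \emph{sparsest} one. Therefore $\beta_j=|\tilde{\mathcal{I}}|=\tilde{\beta}_j$, and combining with the first step, $\tilde{\beta}_j=\beta_j=\alpha_j=\tilde{\alpha}_j$.

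It remains to exhibit $c^{*}$. The linear-algebra fact I would use is that a (minimal) critical tuple $\mathcal{I}$ of $H$ forces the submatrix $H_{\overline{\mathcal{I}}}$, obtained by deleting the rows indexed by $\mathcal{I}$, to have rank exactly $n-1$: deleting $\mathcal{I}$ destroys observability, so $\mathrm{rank}(H_{\overline{\mathcal{I}}})\le n-1$, while minimality means adding back any single row $H(i,:)$, $i\in\mathcal{I}$, restores rank $n$, so $\mathrm{rank}(H_{\overline{\mathcal{I}}})\ge n-1$ and each such row lies outside $\mathrm{rowspace}(H_{\overline{\mathcal{I}}})$. Hence $\mathrm{null}(H_{\overline{\mathcal{I}}})$ is spanned by a single vector $\bar{c}\neq 0$, and for every $i\in\mathcal{I}$ one has $H(i,:)\bar{c}\neq 0$ (otherwise $H(i,:)\in\bar{c}^{\perp}=\mathrm{rowspace}(H_{\overline{\mathcal{I}}})$, a contradiction). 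Applying this to $\mathcal{I}=\tilde{\mathcal{I}}$ and using $j\in\tilde{\mathcal{I}}$, rescale $\bar{c}$ to get $c^{*}$ with $H(j,:)c^{*}=\mu$, so \eqref{eq:15c} holds; and \eqref{eq:15a}--\eqref{eq:15b} hold with $w^{*}=\tilde{w}^{*}$, $d^{*}=\tilde{d}^{*}$ because $H(i,:)c^{*}=0$ for $i\notin\tilde{\mathcal{I}}$ while $w^{*}(i)+d^{*}(i)\ge 1$ for $i\in\tilde{\mathcal{I}}$, choosing $M$ larger than $\max_i|H(i,:)c^{*}|$ as in the big-$M$ setup (at optimality $\tilde{w}^{*}(i)\tilde{d}^{*}(i)=0$, so $\tilde{w}^{*}+\tilde{d}^{*}$ is binary). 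The objective value of $(c^{*},w^{*},d^{*})$ is $\sum\tilde{w}^{*}(i)+\sum\tilde{d}^{*}(i)=|\tilde{\mathcal{I}}|=\tilde{\beta}_j=\beta_j$, so it is optimal for \eqref{sec_idx_com_bigM} w.r.t. $H$; likewise $(c^{*},y^{*})$ with $y^{*}=\tilde{w}^{*}+\tilde{d}^{*}$ is feasible for \eqref{sec_idx_bigM} w.r.t. $H$ with objective $|\tilde{\mathcal{I}}|=\alpha_j$, hence optimal.

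The step I expect to be the main obstacle is the passage in the last two paragraphs: cleanly arguing that Assumption~\ref{assum1} transfers only the \emph{combinatorial} datum — the set $\tilde{\mathcal{I}}$ and its minimality/sparsity — whereas the numerical witness $\tilde{c}^{*}$ does \emph{not} carry over (indeed $\tilde{H}\tilde{c}^{*}$ need not lie in the column space of $H$), so $c^{*}$ must be reconstructed from $H$ alone; and stating the rank-$(n-1)$, one-dimensional-null-space characterization of a critical tuple at the right level of rigor, since it is exactly what links ``$\tilde{\mathcal{I}}$ is a critical tuple of $H$'' to ``there is a stealthy attack direction for $H$ supported precisely on $\tilde{\mathcal{I}}$.'' Everything else is bookkeeping with the big-$M$ constraints and the already-established Theorem~\ref{theo1_com_sec} and Corollary~\ref{Co1_FDIvsCom}.
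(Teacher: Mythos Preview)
Your proposal is correct and follows essentially the same approach as the paper: transfer the sparsest-critical-tuple interpretation of the optimal support from $\tilde{H}$ to $H$ via Assumption~\ref{assum1}, then invoke Theorem~\ref{theo1_com_sec} and Corollary~\ref{Co1_FDIvsCom}. The paper's proof is considerably terser and does not spell out the existence of $c^{*}$; your rank-$(n-1)$, one-dimensional-nullspace argument makes explicit exactly the step the paper leaves implicit, namely that a sparsest critical tuple of $H$ containing $j$ supports a stealthy attack direction with $H(j,:)c^{*}=\mu$.
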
    
\begin{proof}
The optimal solution with $\tilde{w}^{*}$ and $\tilde{d}^{*}$ identifies a sparsest critical tuple containing measurement $j$ for the perturbed model $\tilde{H}$ in \eqref{structured_uncertainty}, which is also a sparsest critical tuple for the model $H$ in \eqref{eq: DC_SE} according to Assumption~\ref{assum1}. Then the set of measurements in this critical tuple is an optimal solution to the security index problem of \eqref{sec_idx_com_bigM} w.r.t. $H$ in \eqref{eq: DC_SE}. According to Theorem~\ref{theo1_com_sec} and Corollary~\ref{Co1_FDIvsCom}, the set of measurements in this critical tuple is also an optimal solution to the security index problem of \eqref{sec_idx_bigM} w.r.t. $H$ in \eqref{eq: DC_SE}.   
\end{proof}

With respect to the security index for attack resources, let $\tilde{\gamma}_j^{a,d}$ and $\tilde{\gamma}_j^{a}$ be the security indexes of combined attacks and FDI attacks from \eqref{sec_idx_com_cost} but w.r.t. perturbed model $\tilde{H}$ in \eqref{structured_uncertainty}. 
We can see that the set of compromised measurements from optimal solution to \eqref{sec_idx_com_cost} w.r.t. $\tilde{H}$ in \eqref{structured_uncertainty} is also the optimal solution to \eqref{sec_idx_com_bigM} and \eqref{sec_idx_bigM} according to Theorem~\ref{the2_sec_conicide}. When it is the case that $C_{A} < C_{I}$, the optimal solution of $\tilde{w}^{*}$ and $\tilde{d}^{*}$ from \eqref{sec_idx_com_cost} w.r.t. $\tilde{H}$, would lead to $\sum \tilde{w}^{*}(i) = 1$ and $\sum \tilde{d}^{*}(k) = \tilde{\beta}_j-1$. Such (1,$\tilde{\beta}_j-1$)-tuple combined attack can be launched with least attack resources when $C_{A} < C_{I}$ and in the following we show that it also can achieve minimized detectability. 

As discussed in Section IV-B, the detection probability would increase 
when 
attacker has limited knowledge of the system model. 
However, for the combined attacks, 
the following proposition states that the combined attacks with structured model uncertainty can still keep stealth against the BDD if the following conditions are satisfied: 1) structured model uncertainty is defined as in \eqref{structured_uncertainty}; 2) Assumption~\ref{assum1} holds.  
\begin{proposition}\label{pro2_steal_com}
For any 
index $j\in\{1,\dots,m\}$ and non-zero $\mu$, under Assumption~\ref{assum1}, let ($\tilde{c}^{*}$, $\tilde{w}^{*}$, $\tilde{d}^{*}$) with $\sum \tilde{w}^{*}(i)=1$ be an optimal solution to \eqref{sec_idx_com_bigM} w.r.t. $\tilde{H}$ in \eqref{structured_uncertainty}. Then this (1,$\tilde{\beta}_j-1$)-tuple combined attack 
from ($\tilde{c}^{*}$, $\tilde{w}^{*}$, $\tilde{d}^{*}$) is a stealth attack. 
\end{proposition}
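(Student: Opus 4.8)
The plan is to verify the stealth condition from Section~IV-B directly: the false-data vector $a$ that the adversary actually injects on the \emph{true} system should satisfy $S_{d}a=0$, equivalently $a$ should lie in the column space of the post-availability-attack matrix $H_{d}$. I would organize the argument in three steps.

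First I would pin down the vector $a$. Since the attack injects false data of magnitude $\mu$ on the target, $a(j)=(1-\tilde{d}^{*}(j))\tilde{H}(j,:)\tilde{c}^{*}=\mu\neq 0$, which forces $\tilde{d}^{*}(j)=0$; the big-$M$ constraint on row $j$ in \eqref{sec_idx_com_bigM} (written for $\tilde{H}$) then forces $\tilde{w}^{*}(j)=1$, and with $\sum_{i}\tilde{w}^{*}(i)=1$ this makes $\tilde{w}^{*}$ the indicator of $\{j\}$. The big-$M$ inequalities of \eqref{sec_idx_com_bigM} further force $\tilde{H}(i,:)\tilde{c}^{*}=0$ at every uncompromised index (those with $\tilde{w}^{*}(i)=\tilde{d}^{*}(i)=0$), while the factor $(1-\tilde{d}^{*}(i))$ in $\tilde{H}_{d}=(I-\mbox{diag}(\tilde{d}^{*}))\tilde{H}$ annihilates the rows with $\tilde{d}^{*}(i)=1$. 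Hence $a=\tilde{H}_{d}\tilde{c}^{*}$ is supported only on $j$ with $a(j)=\mu$; I would write $a=\mu v$, where $v\in\mathbb{R}^{m}$ is the indicator vector of $\{j\}$, and note that the support of $\tilde{d}^{*}$ is exactly $\mathcal{C}\setminus\{j\}$, where $\mathcal{C}$ is the $\tilde{\beta}_{j}$-element set of compromised measurements.

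Second, I would translate the combinatorial structure of $\mathcal{C}$ into a statement about the true model. By the critical-tuple reading of the security index (the characterization recalled around Corollary~\ref{Co1_FDIvsCom}, see also \cite{Sou2013}) together with Theorem~\ref{the2_sec_conicide} and Assumption~\ref{assum1}, $\mathcal{C}$ is a sparsest critical tuple of $H$ containing $j$. Hence removing the proper subset $\mathcal{C}\setminus\{j\}$ of rows from $H$ keeps the system observable, so the submatrix $\hat{H}_{d}$ of $H$ obtained by deleting those rows has full column rank $n$ --- in particular $H_{d}=(I-\mbox{diag}(\tilde{d}^{*}))H$ has rank $n$, so $K_{d},T_{d},S_{d}$ are well defined, consistent with the standing assumption that availability attacks do not cause unobservability --- whereas removing all of $\mathcal{C}$ destroys observability, i.e.\ deleting the single further row $j$ from the rank-$n$ matrix $\hat{H}_{d}$ lowers its rank. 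Therefore $j$ is a \emph{critical measurement} of $\hat{H}_{d}$.

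Finally, I would use the fact that a critical measurement is immune to residual perturbation. Let $c_{0}\neq 0$ span the one-dimensional kernel of the matrix obtained from $\hat{H}_{d}$ by deleting row $j$; full column rank of $\hat{H}_{d}$ forces $\hat{H}_{d}(j,:)c_{0}\neq 0$, so $\hat{H}_{d}c_{0}$ is supported only on $j$. Lifting back to $\mathbb{R}^{m}$ --- the deleted rows are zero rows of $H_{d}$, and $v$ vanishes on them since $j\notin\mathcal{C}\setminus\{j\}$ --- gives $H_{d}c_{0}=(H(j,:)c_{0})\,v$, so $v\in\mathrm{Range}(H_{d})$ and thus $a=\mu v\in\mathrm{Range}(H_{d})$. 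Since $S_{d}H_{d}=0$, this yields $S_{d}a=0$; hence $r(a,d)=S_{d}e_{d}+S_{d}a=S_{d}e_{d}$ has the same law as the no-attack residual, the non-centrality parameter $\lambda_{a,d}=\lVert R^{-1/2}S_{d}a\rVert_{2}^{2}$ vanishes, and $\delta_{a,d}$ equals the false-alarm level --- so the $(1,\tilde{\beta}_{j}-1)$-tuple attack is stealth. The one genuinely non-routine step is the second: recognizing that the ``$(1,\tilde{\beta}_{j}-1)$'' structure, together with the critical-tuple interpretation of the security index, makes the target $j$ a critical measurement of the system the estimator actually sees after the availability attack --- which is exactly what makes the residual blind to $a$ notwithstanding the line-parameter errors baked into $\tilde{H}$. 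The rest is linear-algebra bookkeeping, the only delicate point being to keep the zeroed-row matrix $H_{d}=(I-\mbox{diag}(d))H$ distinct from the row-deleted submatrix $\hat{H}_{d}$ and to confirm that rank $n$ is retained.
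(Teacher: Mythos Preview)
Your proof is correct and takes essentially the same approach as the paper's: both identify the injected vector $a$ as $\mu$ times the indicator of $\{j\}$ and then show this lies in $\mathrm{Range}(H_{d})$ via the critical-tuple characterization under Assumption~\ref{assum1}. The only cosmetic difference is that the paper invokes Theorem~\ref{the2_sec_conicide} to supply a $c^{*}$ with $H_{d^{*}}c^{*}=a$ directly, whereas you unpack that step and construct the vector explicitly as a spanning element of the one-dimensional kernel arising from the critical-measurement property of $j$ in $\hat{H}_{d}$.
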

\begin{proof}
The FDI attack vector of this combined attack is $a = \tilde{H}_{\tilde{d}^{*} }\tilde{c}^{*}$. According to Theorem~\ref{the2_sec_conicide}, there exists $c^{*}$ such that ($c^{*}$, $w^{*}$, $d^{*}$) with $w^{*}=\tilde{w}^{*}$ and $d^{*}=\tilde{d}^{*}$ is an optimal solution to \eqref{sec_idx_com_bigM} w.r.t. $H$ in \eqref{eq: DC_SE}. Using the attack strategy above, $k_{a}=\sum \tilde{w}^{*}(i)= 1$ and the only non-zero entry of the attack vector $a$ is $\mu$ while other measurements in this critical tuple are attacked by availability attacks. Thus this combined attack is with the vector $a = (I - \mbox{diag}(\tilde{d}^{*})) \tilde{H}\tilde{c}^{*} = (I - \mbox{diag}(d^{*})) Hc^{*} = H_{d^{*}}c^{*}$, which can keep stealth w.r.t. $H$ in \eqref{eq: DC_SE}.     
\end{proof}
%

\section{Risk Assessment for Data Attacks}

The previous sections focus on vulnerability assessment of SE to combined attacks with limited knowledge. Following the procedure of risk analysis in \cite{Sridhar2012}, in this section we define and analyze the \emph{risk} brought by attacks with limited knowledge. 


Usually the total \emph{risk} of data attacks is defined as the likelihood of attack multiplied by the potential attack impact \cite{Ronald2012}. For a $(k_a, k_d)$-tuple combined attack, the risk metric $\mathbf{R}(a,d)$ can be expressed as 
\begin{equation}\label{risk_def}
\mathbf{R}(a,d) = \mathbf{L}(a,d)*\mathbf{I}(a,d)
\end{equation}
where $\mathbf{L}(a,d)$ denotes the likelihood of the combined attack with attack vectors $a$ and $d$, 
and $\mathbf{I}(a,d)$ denotes the attack impact. 
For the attacks with larger risk metrics, they bring more risk to reliable system operation. In the following we discuss how $\mathbf{L}(a,d)$ and $\mathbf{I}(a,d)$ are formulated.  

\subsection{Likelihood of Data Attacks}

The attack likelihood relates to the vulnerability of the system. In this work, the likelihood of the attack is taken as the probability that the attack is launched and the probability that the attack can keep stealth against the detection schemes,
\begin{equation}\label{likelihood}
\mathbf{L}(a,d) = P(a,d) P(s|a,d),
\end{equation}
where %
$P(s|a,d)$ denotes the conditional probability of the combined attack passing the BDD if it has been performed. 
For the attack with limited knowledge, the detection probability $\delta_{a,d}$ can be obtained from \eqref{nchi_pdf_com}, thus we have $P(s|a,d) = 1 - \delta_{a,d}$. 
In \eqref{likelihood}, $P(a,d)$ represents the probability that a particular adversary would perform a combined attack and successfully corrupt the data. Obtaining meaningful and realistic data for calculating $P(a,d)$ remains an unsolved and open issue for most of the established approaches \cite{Ashok2017}.  
The proposed security index $\tilde{\gamma}_j^{a,d}$ w.r.t. perturbed model $\tilde{H}$ captures the efforts required by a combined attack 
and essentially can be related to the probability $P(a,d)$. We assume that if the attacks 
have the same security index of $\tilde{\gamma}_j^{a,d}$, they have the same probability of $P(a,d)$.  
In this paper, to compare the risk of attacks with the same security index, 
we ``normalize'' $P(a,d)$ to be 1, 
meaning that the attacks have been performed successfully. 
The following risk metric applies to the attacks with the same security index of $\tilde{\gamma}^{a,d}_j$,
\begin{equation}\label{risk}
\mathbf{R}(a, d) = P(a,d) P(s|a,d) \mathbf{I}(a, d) = (1-\delta_{a,d}) \mathbf{I}(a, d),
\end{equation}
%

For the $k_a$-tuple FDI attacks with the same security index of $\tilde{\gamma}_{j}^{a}$, 
the formulation of risk metric is similar, i.e. $\mathbf{R}(a) = (1 - \delta_{a}) \mathbf{I}(a)$ where $\delta_{a}$ is the detection probability from \eqref{nchi_pdf}, $\mathbf{I}(a)$ denotes the attack impact and $\mathbf{R}(a)$ is the risk metric. Thus in the case of $\tilde{\gamma}_{j}^{a,d} = \tilde{\gamma}_{j}^{a}$, the risk of combined attacks and FDI attacks is comparable.   

\subsection{Attack Impact: Errors of Load Estimate}

The estimated information from SE is used by further applications in EMS to compute optimal control actions. These are typically computed by minimizing network operation costs which are obtained by solving OPF algorithms. As the work in \cite{Liang2016} 
shows, the OPF application uses the load estimate as the inputs. 
If data attacks take place and pass the BDD, the load estimates get perturbed which influences the control actions. 
Therefore, we consider the impact metric as a function of the bias introduced by the attack on the load estimate.

Assuming that there are $m_{inj}$ injection measurements including loads, we consider the impact on the errors of estimating net power injections, which can be described as
\begin{equation}\label{err_inj}
\epsilon = \hat{z}_{inj,a,d} - z_{inj},
\end{equation}
where $z_{inj} \in \mathbb{R}^{m_{inj}} $ is the original injection measurements including loads 
and $\hat{z}_{inj, a, d} \in \mathbb{R}^{m_{inj}}$ is the vector of estimated measurements under a $(k_{a}, k_{d})$-tuple combined attack. Thus
\begin{equation}
\epsilon = H_{inj} \hat{x}_{a,d} - (H_{inj}x+e_{inj}),
\end{equation}
where $\hat{x}_{a, d}=K_{d}(z_{d} + a)=x+K_{d}e_{d}+K_{d}a$, $H_{inj} \in \mathbb{R}^{m_{inj} \times n}$ denotes the submatrix of $H$ by keeping the rows corresponding to injection measurements including loads, 
and $e_{inj} \in \mathbb{R}^{m_{inj}}$ is the noise vector of these measurements. We can further obtain $\epsilon = H_{inj} K_{d}a + H_{inj} K_{d}e_{d} -e_{inj}$ where the term introduced by the attacks is $H_{inj} K_{d}a$. Here $K_{d}$ is the function of the matrix $H_{d}$ as defined in \eqref{eq: K0}. The expected value of $\epsilon$ is
\begin{equation}\label{exp_err}
\mathbb{E}(\epsilon) = H_{inj} K_{d}a.
\end{equation}
We have the following definition of the attack impact metric for combined attacks.
\begin{definition}\label{def_attackimpact}
The impact metric $\mathbf{I}(a,d)$ for quantifying attack impact of a combined attack with FDI attack vector $a$ and availability vector $d$ on load estimate is defined as the 2-norm of $H_{inj}K_{d}a$, i.e. $\mathbf{I}(a,d) := \lVert H_{inj} K_{d}a \rVert_{2}$.
\end{definition}

Similar to the combined attacks, we define the attack impact metric $\mathbf{I}(a) = \lVert H_{inj} Ka \rVert_{2}$ for a $k_a$-tuple FDI attack with attack vector $a$. We continue to adopt the linear attack policies to compute attack vectors for attacks with 
limited knowledge, i.e., 
$a=\tilde{H}_{d}c$ for combined attacks and 
$a=\tilde{H}c$ for FDI attacks. 

\section{Case Study}

In this section we apply the analysis to the IEEE 14-bus system 
(\myfigref{Figure_14bussystem2}).  
We conduct simulations on DC 
model for the purposes of: 1) illustrating vulnerability of SE to combined attacks;
2) providing insights into how combined attack can differ from FDI attack; 3) evaluating the risk of data attacks and giving the risk prioritization. In the performed experiments, 
measurements are placed on all the buses and transmission lines to provide large redundancy. The per-unit system is used and the power base is $100 MW$. The 
measurements are generated under the DC model with Gaussian noise ($\sigma_j = 0.02$ for any measurement $j$). For the limited knowledge model, we assume that 
the attacker knows the exact topology 
but has estimated line parameters with errors up to $\pm$20$\%$. %

\myincludegraphics{Figure_14bussystem2}{The IEEE 14-bus 
system. The measurements are labeled different colors according to their security index $\gamma_j^{a,d}$ from \myfigref{Figure1_Securityindex2}. 
The most vulnerable measurements with small
index ($<4$) are color coded red. The measurements that
have large index ($>4$) are color coded green. The others 
are color coded blue and their vulnerabilities
lie somewhere in between. A similar figure of measurements under FDI attacks can be found in \cite{TeixeiraSou2015}.}

\subsection{Security Index for Vulnerability Analysis}  

In order to expose vulnerability of SE to data attacks, we calculated the security index using the computation solutions of \eqref{sec_idx_com_bigM} (according to Theorem~\ref{theo1_com_sec}) and \eqref{sec_idx_bigM} for both combined attacks and FDI attacks. Thus the minimum number of compromised measurements and attack resources needed by the attacker to corrupt SE and pass the BDD are determined. \myfigref{Figure1_Securityindex2} shows the security indexes $\gamma_j^{a,d}$ and $\gamma_j^{a}$ of combined attacks and FDI attacks, where the x-axis indicates the measurement targeted by the attacker to inject false data of $\mu = 0.1 p.u.$. The results illustrate the attack resources needed by the attacker to keep stealth. 
The security index of combined attacks is also showed in \myfigref{Figure_14bussystem2} where the measurements are color coded to indicate which ones are more vulnerable. 
Combining \myfigref{Figure1_Securityindex2} and \myfigref{Figure_14bussystem2}, the security index can illustrate the security week point in a power system.   

\myincludegraphics{Figure1_Securityindex2}{The security index $\gamma_{j}^{a,d}$ under combined attacks and 
$\gamma_{j}^{a}$ under FDI attacks are plotted versus measurement index $j$. 
Here the cost of FDI attack on per measurement is assumed to be 1 
and $C_{A} = 0.5$ as $C_{A}/C_{I}=0.5$.}

The values of security index under combined attacks are smaller than the ones under FDI attacks when $\mathit{C}_{A} < \mathit{C}_{I}$ 
from \myfigref{Figure1_Securityindex2}. For instance, in order to corrupt measurement $j = 9$, the FDI attack needs a value of 11 for attack resources (i.e. a 11-tuple FDI attack) while the combined attack only needs a value of 6 (i.e. a (1,10)-tuple combined attack). This implies that SE is more vulnerable to combined attacks with less attack resources. 
The results also show that $k_a = 1$ for the combined attacks %
and the optimal attack cost is $C_I + (\beta_j - 1)C_A$ for the case $\mathit{C}_{A} < \mathit{C}_{I}$, 
which is consistent with Proposition~\ref{pro1_optcost_attack}.  

\subsection{Detectability of Attacks with Limited Knowledge}

Using the attack policy $a = \tilde{H}_{d}c$ for combined attacks and $a = \tilde{H}c$ for FDI attacks with the given model uncertainty, the detection probability of attacks 
can be obtained according to \eqref{nchi_pdf_com} and \eqref{nchi_pdf}. From Theorem~\ref{the2_sec_conicide} we see that the compromised measurements set from the optimal solutions of \eqref{sec_idx_com_cost} 
w.r.t. $\tilde{H}$ in \eqref{structured_uncertainty} is in the same critical tuple with the one w.r.t. $H$ in \eqref{eq: DC_SE}. Thus a set of 11 measurements (a critical tuple) containing measurement $j=9$ needs to be compromised by the attacker from the security index in \myfigref{Figure1_Securityindex2}. For the sake of comparison, the combined attacks and FDI attacks are performed in the same set of these 11 measurements. \myfigref{Figure2_Detectionprob2} shows the detection probability of combined attacks and FDI attacks targeting these 11 measurements. 
In addition to the theoretical results,  
the empirical detection probability results are also presented for the 11-tuple FDI attack and (2,9)-tuple combined attack respectively. 

To obtain the empirical detection probability, 
we use Monte Carlo simulations. Taking the (2,9)-tuple combined attack as an example, 
200 different points of attack magnitude $\mu$ were taken in random from 0 to 0.5 p.u. and the corresponding attack vectors were built. For each attack vector with the taken magnitude $\mu$, total 1000 Monte Carlo runs were executed to obtain the 
detection probability of such attack. In each Monte Carlo simulation, the 
measurements were created by the DC model with Gaussian noise and the attack vector was added to the measurements. For the attacked measurements, the SE and BDD with the false alarm rate 0.05 were executed.

\myincludegraphics{Figure2_Detectionprob2}{The detection probability is plotted versus the attack magnitude. 
The attacks are under structured uncertainty model and performed in the 
set of 11 measurements and the 
false alarm rate $\alpha$ is 0.05.} 

From \myfigref{Figure2_Detectionprob2} we can see that the empirical results of detection probability follow the theoretical one. This proves that using the approximation of the distribution of $J_{a,d}(\hat{x})$ and $J_{a}(\hat{x})$ 
can provide the detection probability, and it is reliable to use theoretical detection probability for risk analysis in the following. The results in \myfigref{Figure2_Detectionprob2} illustrate that combined attacks can have lower detection probability comparing with FDI attacks, meaning that SE is more vulnerable to combined attacks as they have higher probability not to be discovered by the BDD. An interesting result is that with smaller $k_a$ the combined attack also has lower probability to be detected. In the case that $k_a = 1$ and $k_d = 10$, the (1,10)-tuple combined attack can keep stealth, 
which is consistent with Proposition~\ref{pro2_steal_com}.     

\subsection{Risk Metrics for Attacks} 

We continue with the risk analysis of combined attacks. %
Simulations were conducted on the same scenarios as Section VI-B where the attacker manipulates the set of 11 measurements (a critical tuple). 
We analyze the attack impact and present the risk of 
the combined attacks and FDI attacks. For the risk analysis, we take the attack cost values that satisfy $C_A = C_I$, thus the security indexes $\tilde{\gamma}_j^{a,d}$ and  $\tilde{\gamma}_j^{a}$ w.r.t. $\tilde{H}$ in \eqref{structured_uncertainty} of these attacks are equal to each other 
and the probability $P(a,d)$ can be ``normalized'' as discussed in Section IV-B. The results for attack impact metrics versus detection probability are given in \myfigref{Figure3_ImpvsDetectionprob2}. The values of risk metrics for combined attacks and FDI attacks are shown in \myfigref{Figure4_Risk2}. 

Under the perturbed model with uncertainty, the attacker has the possibility to be detected by the BDD while introducing errors on load estimate. From \myfigref{Figure3_ImpvsDetectionprob2}, we see that combined attacks can have similar attack impact metrics with FDI attacks but lower detection probability with the same attack magnitude $\mu$ (0.15 p.u. or 0.25 p.u. as shown in \myfigref{Figure3_ImpvsDetectionprob2}). 
Especially the (1,10)-tuple combined attack 
has larger impact metrics than attacks with limited knowledge for the both cases that attack magnitude $\mu = 0.15 p.u.$ or $\mu = 0.25 p.u.$. 

For the \emph{risk} metrics in \myfigref{Figure4_Risk2}, when the attack magnitude $\mu$ increases from zero, the risk metric increases due to the low detection probability. After $\mu$ reaches certain values, the risk metric decreases since the attacks can be discovered 
with high probability. It's also shown that combined attacks can have larger risk metrics especially the cases of (1,10)-tuple and (2,9)-tuple combined attacks. It should be noted that though we assume $C_A = C_I$ to obtain the risk metrics, the risk prioritization of these attacks in \myfigref{Figure4_Risk2} would not change if $C_A < C_I$ is assumed. This is because the combined attacks can be launched with less attack resources 
when $C_A < C_I$, 
resulting in larger risk values comparing with FDI attacks. %

\myincludegraphics{Figure3_ImpvsDetectionprob2}{The attack impact metric is plotted versus the detection probability. The attacks are under structured uncertainty model and performed in the 
set of 11 measurements. 
Here we assume $C_A = C_I$ and 
false alarm rate $\alpha$ is 0.05.} 

\myincludegraphics{Figure4_Risk2}{The risk metric is plotted versus the attack magnitude. The attacks are under structured uncertainty model and performed in the 
set of 11 measurements. 
Here we assume $C_A = C_I$ and 
false alarm rate $\alpha$ is 0.05.} 

\section{Discussion and Conclusion}


In this paper we see that combined attacks can succeed with less resources (if $C_A < C_I$) and lower detection probability when the adversarial knowledge is limited, bringing more risk to reliable system operation. It also should be noted that this paper assumes that the SE treats unavailable measurements due to attacks as a case of missing data, although the amount of missing data under attacks is larger than the one under normal conditions. 
In addition, availability attacks like DoS attacks could trigger alerts on ICT-specific measures (e.g., intrusion detection). 
These two features give the opportunities to develop better cross-domain detection schemes for availability portion of the attacks improving the overall combined attacks detection. 
Other research directions to explore in the future include evaluating physical impact of combined attacks and exploring the vulnerability of other monitoring/control applications to combined attacks.

\ifCLASSOPTIONcaptionsoff
  \newpage
\fi


\bibliographystyle{IEEEtran}
\bibliography{IEEEabrv,Literature_SmartGrid}
\end{document}